\title{An Efficient Algorithm for Unbalanced 1D Transportation}
\author{Gabriel Gouvine}
\date{November 2023}
\theoremstyle{definition}
\newtheorem{theorem}{Theorem}[section]
\newtheorem{definition}{Definition}[section]
\newtheorem{lemma}{Lemma}[section]
\begin{document}

\maketitle

\begin{abstract}
Optimal transport (OT) and unbalanced optimal transport (UOT) are central in many machine learning, statistics and engineering applications.
1D OT is easily solved, with complexity $\mathcal{O}(n\log n)$, but no efficient algorithm was known for 1D UOT.
We present a new approach that leverages the successive shortest path algorithm. By employing a suitable representation, we bundle together multiple steps that do not change the cost of the shortest path.
We prove that our algorithm solves 1D UOT in $\mathcal{O}(n\log n)$, closing the gap.
\end{abstract}

\section{Introduction}

Optimal transport theory was introduced to solve the problem of moving a pile of earth with the minimum amount of work.
Today optimal transport (OT) is used extensively, with applications
in image processing~\cite{papadakis_optimal_2015,bonneel_survey_2023,mori_recognizing_2003,thayananthan_shape_2003,mortensen_sift_2005,ke_pca-sift_2004,lowe_distinctive_2004,bonneel_spot_2019,bai_sliced_2023,pitie_n-dimensional_2005},
electronic circuit design~\cite{brenner_faster_2005, gouvine_coloquinte_2015}
and machine learning~\cite{sejourne_faster_2022,solomon_wasserstein_2014,montavon_wasserstein_2015,arjovsky_wasserstein_2017,liu_wasserstein_2019,tolstikhin_wasserstein_2019,frogner_learning_2015}.

These applications generally consider the optimal transport of mass between two sets of points in $\mathbb{R}^d$ using the euclidean distance.
In practice, it is either solved optimally using the network simplex algorithm, or approximated using various heuristics~\cite{peyre_computational_2020,flamary_pot_2021}.
A special case is $d = 1$, with points on the real line.
This 1D version of OT is often used as a heuristic to approximate solutions to multidimensional problems~\cite{bai_sliced_2023,bonneel_spot_2019,pitie_n-dimensional_2005,gouvine_coloquinte_2015,sejourne_faster_2022}.

When the total supply equals the total demand, the problem is said to be balanced and 1D OT is easy to solve with complexity $\mathcal{O}(n \log n)$.
If the demand is larger than the supply and some of it needs to be ignored, the problem is said to be unbalanced, and no efficient algorithm is known.
This paper presents an algorithm to solve 1D UOT with complexity $\mathcal{O}(n \log n)$, identical to the balanced case.






\section{Prior work}

It is well known that the balanced 1D OT is solvable in time $\mathcal{O}(n \log n)$ by allocating the sources to the sinks in order, with $n$ the total number of sources and sinks~\cite{sejourne_faster_2022}.
However, this fast algorithm does not extend to the UOT, and many recent publications attempt to create fast algorithms for variants of 1D UOT~\cite{sejourne_faster_2022,bai_sliced_2023,bonneel_spot_2019}.

The transportation problem is a special case of minimum-cost flow problem, which is itself a special case of linear programming.
In the general case, a direct application of minimum cost flow algorithms yields a complexity of $\mathcal{O}(n^4 \log n)$~\cite{orlin_faster_1988}. With preprocessing, this can be improved to the best known bound of $\mathcal{O}(n^2 \log n)$ for 1D UOT~\cite{gudmundsson_small_2007}.

The use of the earth-mover distance in machine learning applications has motivated the search for fast approximations to the optimal transport solution.
A common approach is the use of the Sinkhorn distance instead, a regularized version of the transportation distance~\cite{cuturi_sinkhorn_2013}.
The complexity of these approximate algorithms on the 1D UOT is not known~\cite{sejourne_faster_2022}.

After the redaction of this paper, it has been brought to our attention that 1D OT and its variants have been described as transportation problems with a Monge cost matrix~\cite{burkard_perspectives_1996}.
Using this framework, \cite{aggarwal_efficient_1992} claims an algorithm with $\mathcal{O}(n \log n)$ complexity for 1D UOT in 1992.
Given that the algorithm and proof remain vague and that this result is forgotten in the recent literature~\cite{sejourne_faster_2022,bai_sliced_2023,bonneel_spot_2019}, we believe that our paper still fills an important gap in providing a working algorithm for 1D UOT.



\section{Algorithm}

The optimal transport problem is a special case of minimum cost flow problem.
After introducing it formally, we show a simple algorithm derived from the successive shortest path algorithm.
We then show that solutions obtained by this algorithm can be represented as a list of "positions" instead of a network flow.
The steps taken by the algorithms can be seen as simple updates on these positions.
Finally, we describe the algorithm operating on this representation, and prove its complexity.

\subsection{Problem statement}

With $n$ sources and $m$ sinks, the 1D UOT is stated as follow:

\begin{align}
    &\text{minimize } &&\sum_{i = 1}^n \sum_{j = 1}^m \lVert u_i - v_j \rVert x_{ij}\\
    &\text{subject to } && \sum_{j = 1}^m x_{ij} = s_i, & 1 \leq i \leq n\\
    &&& \sum_{i = 1}^n x_{ij} \leq d_j, & 1 \leq j \leq m \label{eq:inequality}\\
    &&& x_{ij} \geq 0, & 1 \leq i \leq n, 1 \leq j \leq m
\end{align}

In this model, $u_i$ (respectively $v_j$) are the positions of the sources (respectively the sinks). We denote $c_{ij} = \lVert u_i - v_j \rVert$ the cost of sending flow from source $i$ to sink $j$.

$s_i$ are the supplies associated with the sources, and $d_j$ are the demands associated with the sinks.

Finally, $x_{ij}$ is the amount of flow sent from source $i$ to sink $j$ in a solution.

  \begin{figure}[H]
	\centering
	
	\includegraphics[width=0.5\textwidth]{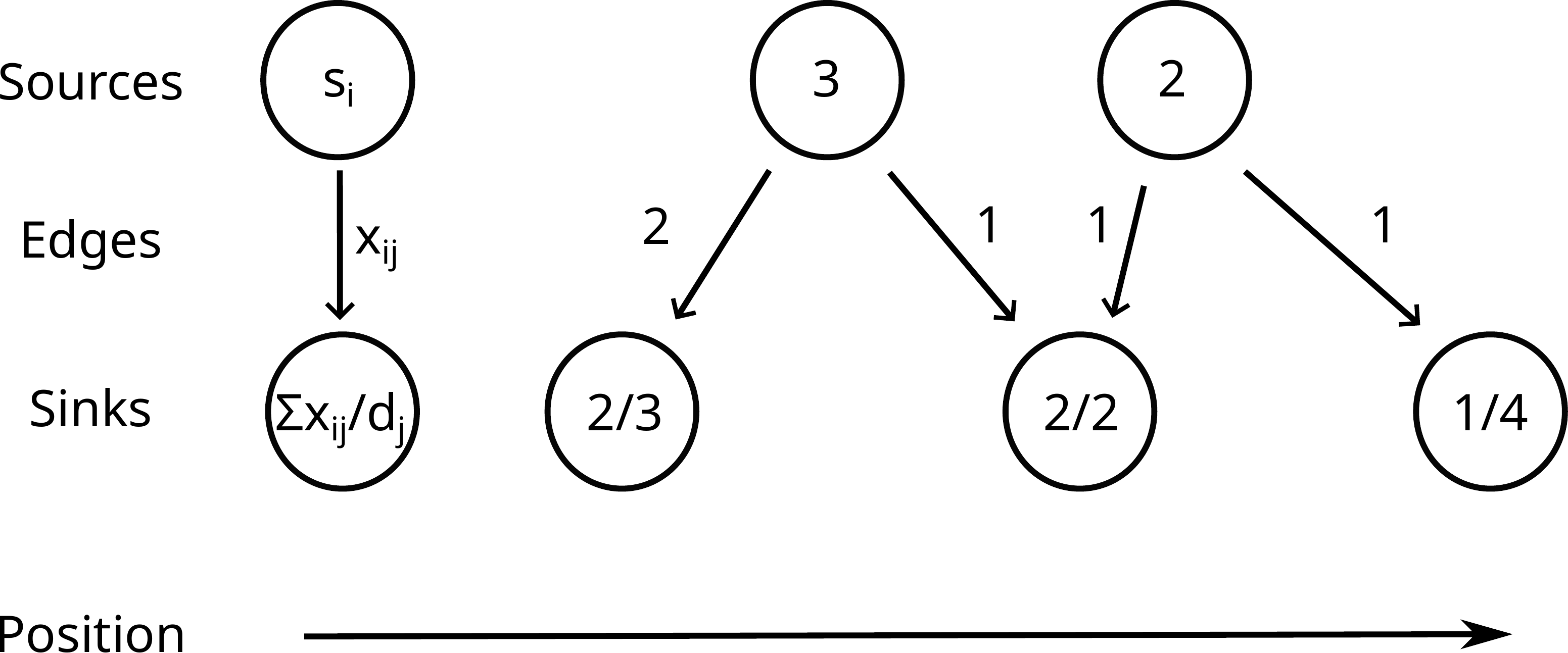}
 \captionsetup{width=.8\linewidth}
 \caption{Representation of a 1D transportation problem and a solution as a flow problem. $u_i$ and $v_j$ values are shown by the position of the node. Edges with zero flow are omitted for clarity.}
 \label{fig:representation}
  \end{figure}

For what follows, we assume demands and capacities to be non-zero, and $u$ and $v$ to be strictly sorted i.e. $u_i < u_{i+1}$ for $1 \leq i < n$ and $v_j < v_{j+1}$ for $1 \leq j < m$.
Both conditions can be enforced on a non-conforming instance by simple pre- and post-processing.

We denote the sums of previous supplies $S_i = \sum_{k=1}^{k\leq i} s_k$ for $0 \leq i \leq n$, and of previous demands $D_j = \sum_{l=1}^{l \leq j} d_l$ for $0 \leq j \leq m$.

Finally, we denote $\delta_{ij} = c_{i j+1} + c_{i+1 j} - c_{i j} - c_{i+1 j+1}$ for $1 \leq i < n$ and $1 \leq j < m$. For network flow algorithms, it is the change in shortest path cost when the flow sent from source $i$ to sink $j+1$ drops to zero, and source $i + 1$ is sent instead.

\subsection{Properties}

\begin{definition}
A solution is \textit{monotonic} when the sources are allocated to the sinks in order, that is $x_{il} > 0 \implies x_{jk} = 0$ for $i < j$ and $k < l$.
\end{definition}

\begin{definition}
A monotonic solution has \textit{no hole} if a source allocated to two sinks or more is given the full capacity of all sinks in between.
That is, $x_{ik} > 0 \text{ and } x_{il} > 0 \implies x_{ij} = d_j$ for $k < j < l$.
\end{definition}

There is always a monotonic optimal solution~\cite{bai_sliced_2023}.
By simple exchange, there is always a monotonic optimal solution with no hole.
The solutions obtained by the algorithm for balanced 1D OT satisfy these properties by construction~\cite{sejourne_faster_2022}.

  \begin{figure}[H]
	\centering
	
\begin{minipage}{.4\textwidth}
	\centering
	\includegraphics[width=0.95\textwidth]{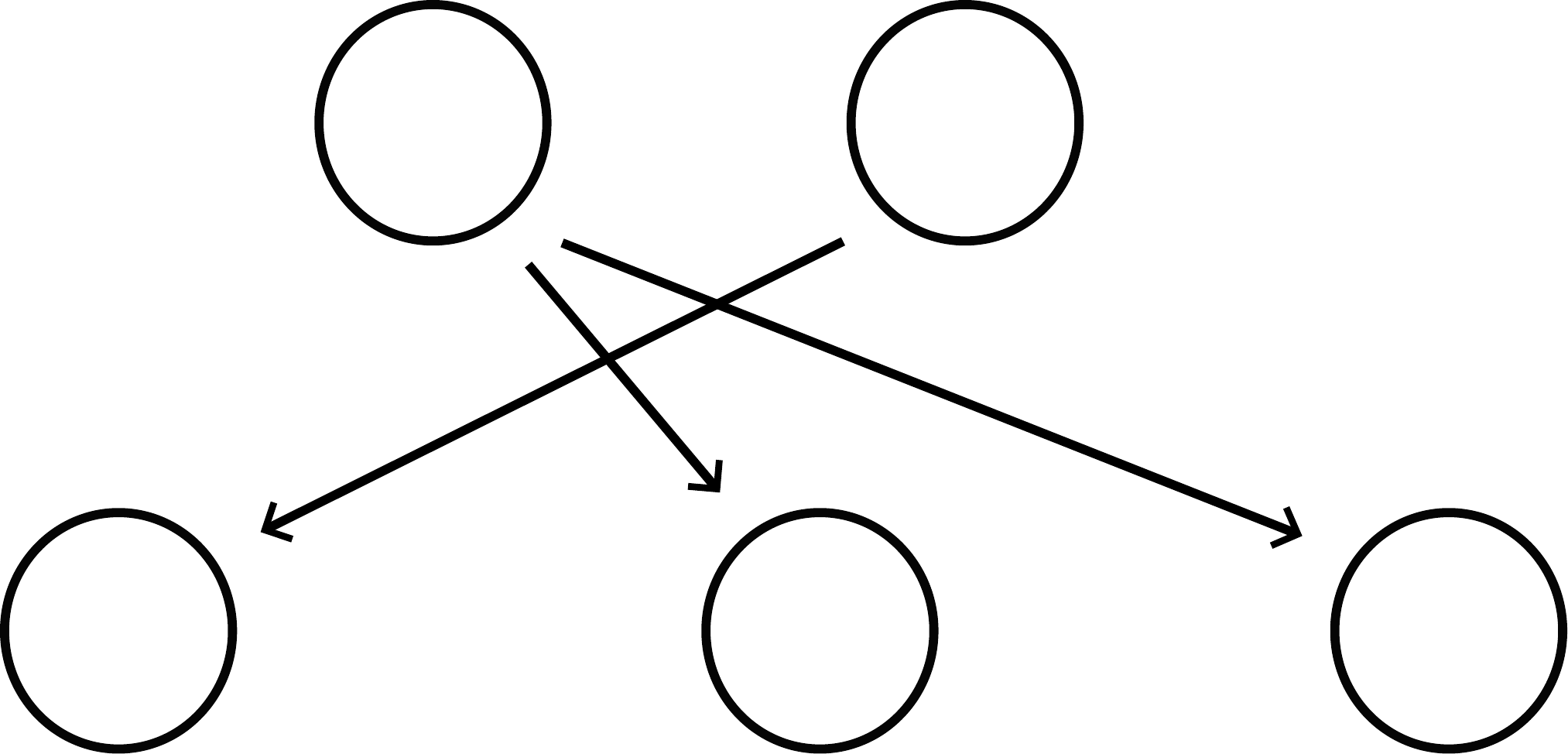}
 \caption{A non-monotonic solution (edges with flow are crossed).}
 \label{fig:non_monotonic}
 \end{minipage}%
\hfil
\begin{minipage}{.4\textwidth}
	\centering
	\includegraphics[width=0.95\textwidth]{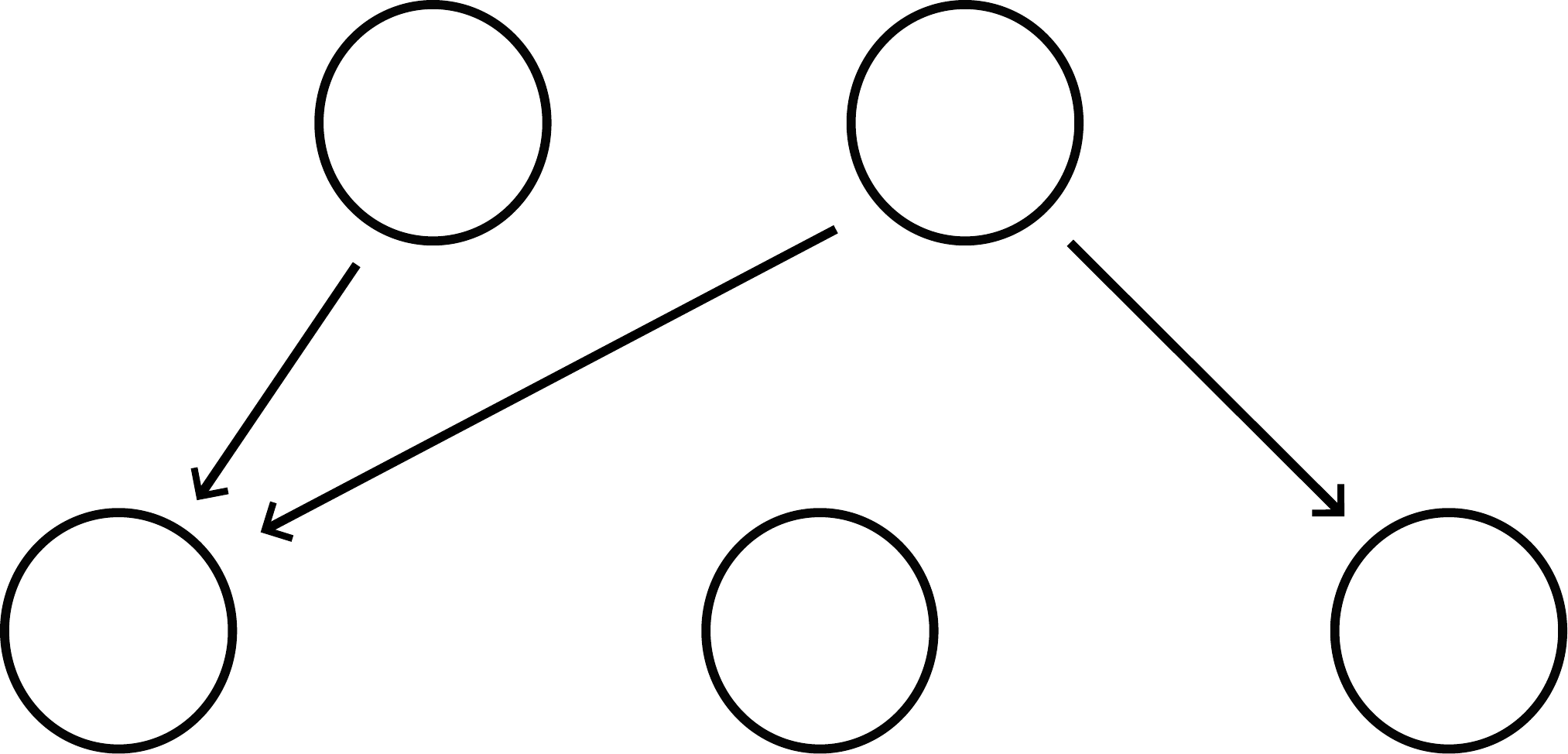}
 \caption{A solution with a hole (the middle sink has unmet demand).}
 \label{fig:hole}
 \end{minipage}
  \end{figure}

\subsection{Successive shortest paths algorithm}

\begin{algorithm}
\caption{1D successive shortest path}
\label{alg:basic_transportation}
\begin{algorithmic}
\For{each source $i$}

\While{$i$ has free supply}
  \State $o \gets \arg\,\min_j c_{ij}$ \text{ the optimal sink for this source}
  \State Find the first sink $j > o$ with free demand
  \If{no such sink exist}
    \State Send supply on the path from the last sink
  \ElsIf{no previous sink has free demand}
    \State Send supply to $j$
  \Else
    \State $c_p \gets$ cost of the path from $j-1$
    \If{$c_p + c_{ij-1} \leq c_{ij}$}
    \State Send supply on the path from $j-1$
    \Else
    \State Send supply to $j$
    \EndIf
  \EndIf
\EndWhile
\EndFor
\end{algorithmic}
\end{algorithm}

We consider a variant of the successive shortest path algorithm (SSP) to solve the unbalanced 1D transportation problem.
This algorithm is a well-known method to minimum-cost-flow problems.

We maintain the optimal solution for the partial problem with $k$ sources and extend it by sending flow from the $k+1$\textsuperscript{th} source.
At each step of the algorithm, it finds the shortest path in the residual graph from the sources to a sink with free demand, then sends as much supply as possible along it.

Let's consider a first version of the algorithm, shown in \autoref{alg:basic_transportation}, where the sources are processed in coordinate order. Given that the partial solution is optimal, each step only has two possible paths to consider:
\begin{itemize}
    \item the single edge to the closest sink, or the first available sink with unmet demand;
    \item the shortest path from the last occupied sink to a sink with free demand, always considering adjacent sinks and the leftmost source allocated to the sink.
\end{itemize}

Two steps of the algorithm are illustrated in \autoref{fig:SSP}.

  \begin{figure}[ht]
	\centering
	
\begin{subfigure}{.4\textwidth}
	\centering
	\includegraphics[width=0.95\textwidth]{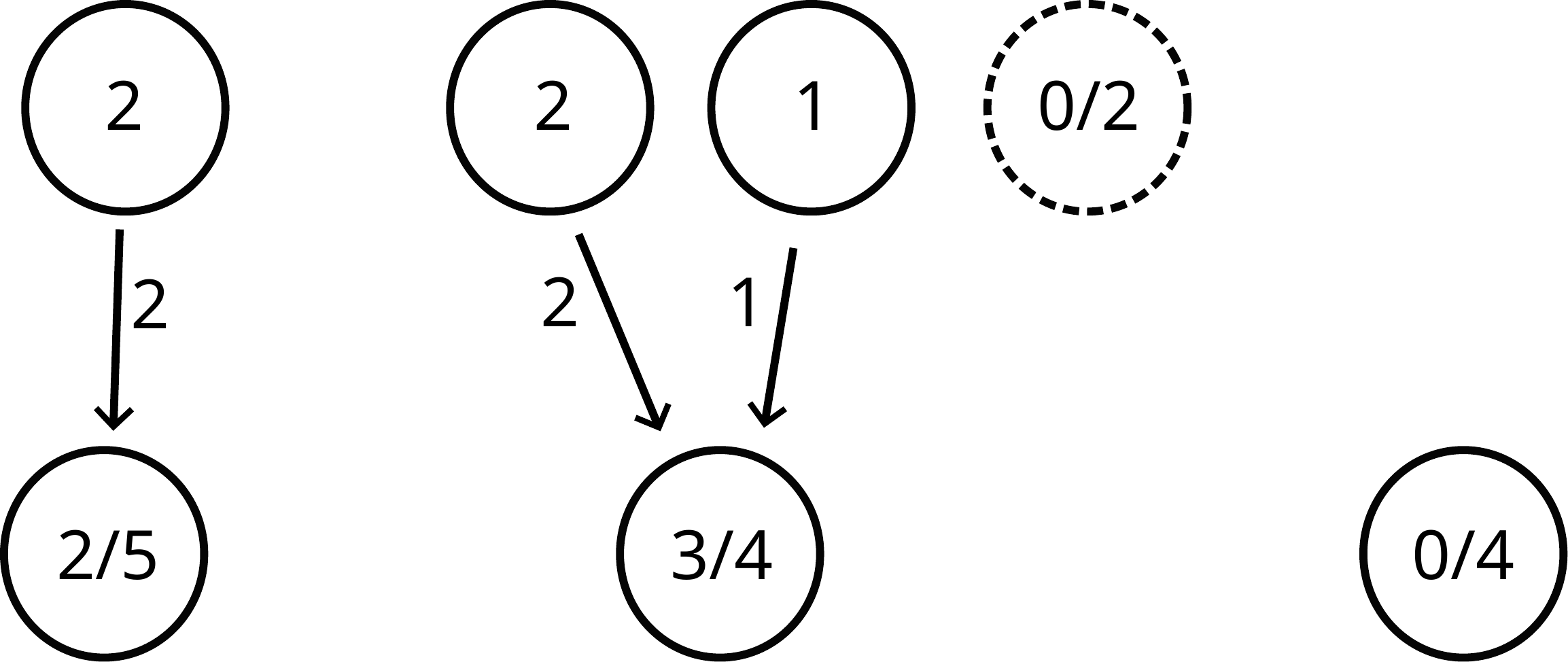}
 \caption{Adding the next source to an existing solution.}
 \end{subfigure}%
\hfil
\begin{subfigure}{.4\textwidth}
	\centering
	\includegraphics[width=0.95\textwidth]{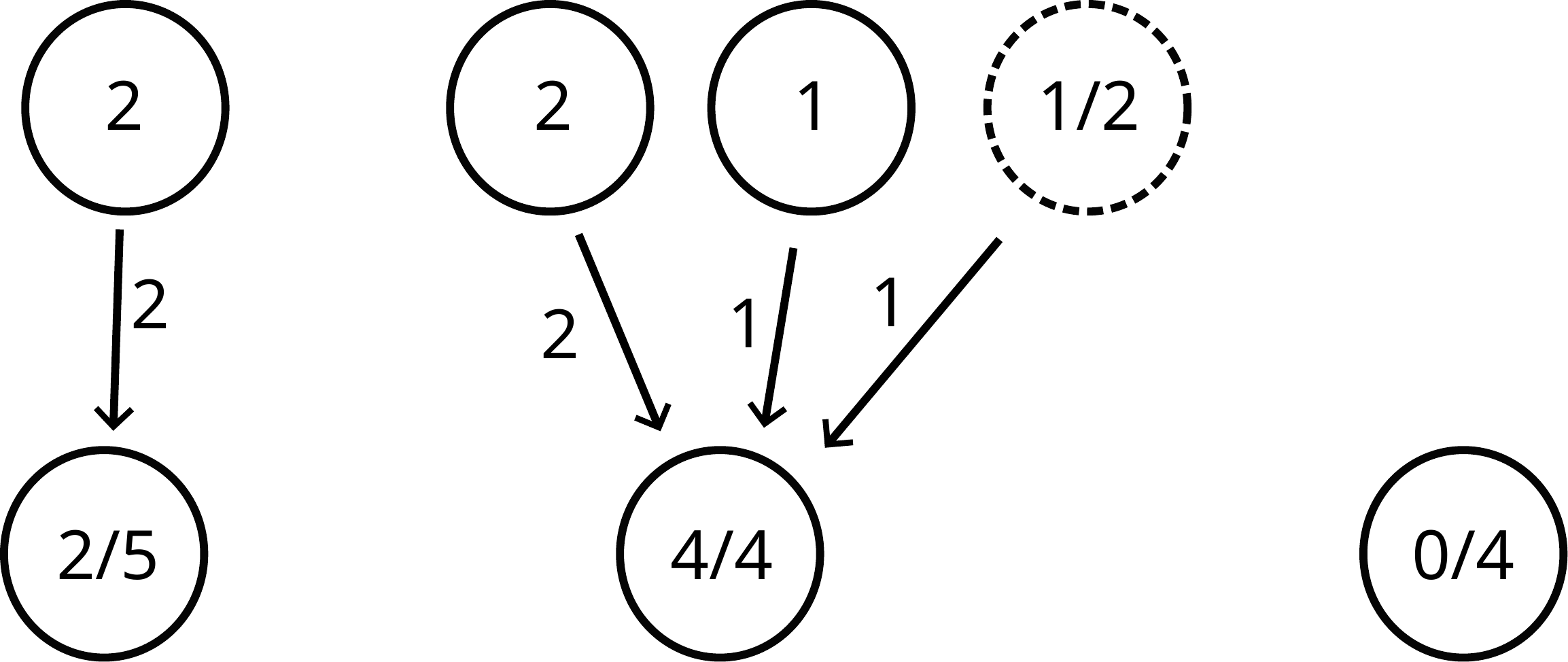}
 \caption{Trivial case where the best sink has unmet demand.}
 \end{subfigure}

\vspace{2em}
 	
\begin{subfigure}{.4\textwidth}
	\centering
	\includegraphics[width=0.95\textwidth]{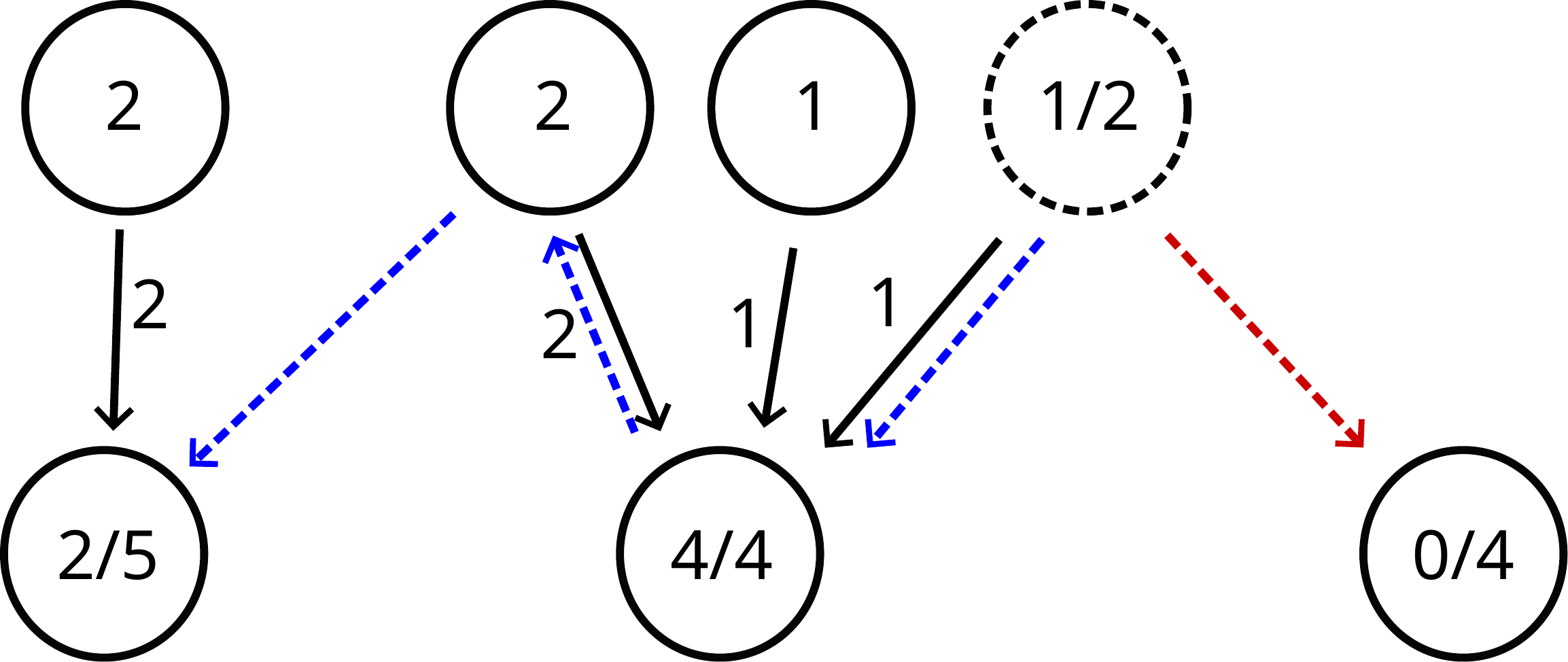}
 \caption{General case: two possible paths to send the remaining supply.}
 \end{subfigure}%
\hfil
\begin{subfigure}{.4\textwidth}
	\centering
	\includegraphics[width=0.95\textwidth]{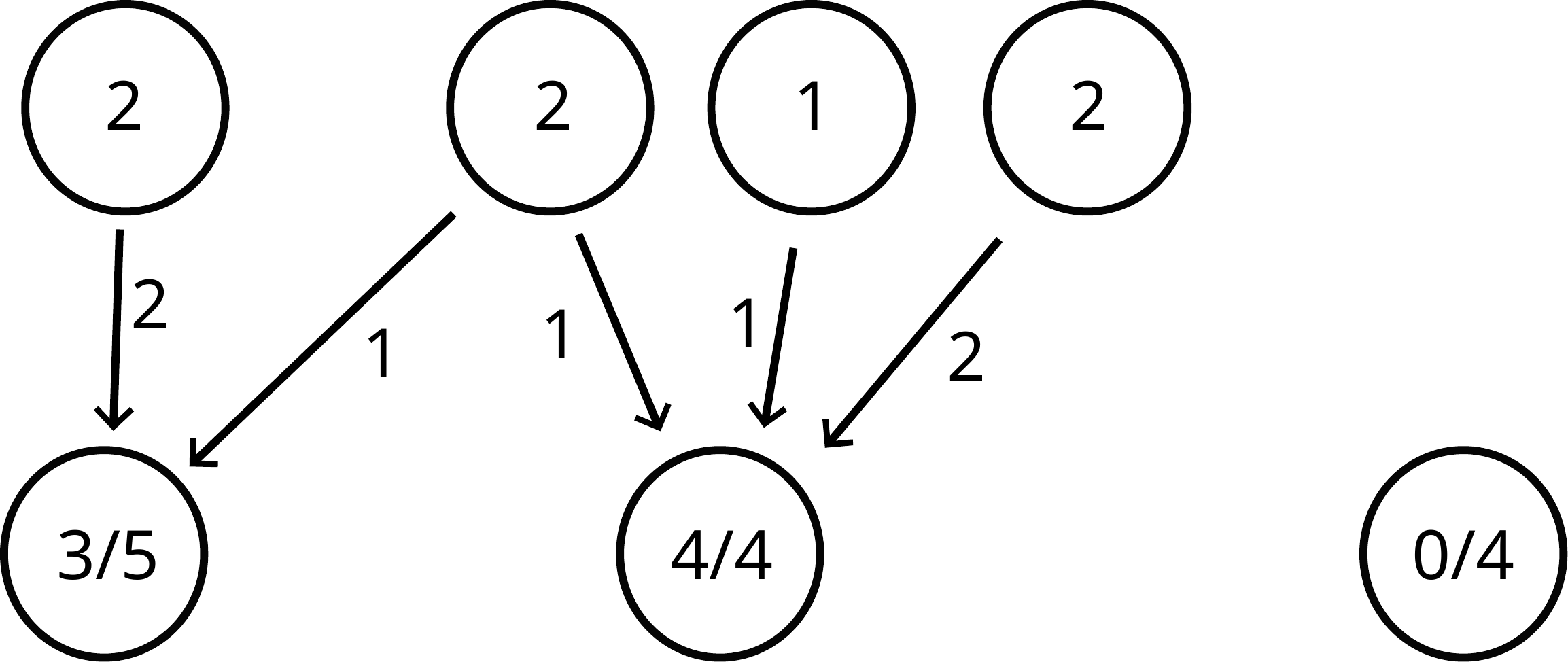}
 \caption{Solution after sending flow on the path to the left.}
 \end{subfigure}
 \caption{Two steps of the SSP algorithm, sending the supply from a new source}
 \label{fig:SSP}
  \end{figure}

\begin{theorem}
    The SSP algorithm in \autoref{alg:basic_transportation} has complexity $\mathcal{O}(nm(n+m))$. 
\end{theorem}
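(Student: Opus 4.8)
The plan is to bound two quantities whose product gives the claimed complexity: the total number of augmentations performed (i.e.\ executions of the inner while-loop body, summed over all sources), and the work done per augmentation. I will show the former is $O(nm)$ and the latter is $O(n+m)$, so that the total cost is $O(nm)\cdot O(n+m)=O(nm(n+m))$.

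For the number of augmentations, the outer loop runs exactly $n$ times, so it suffices to show that for a fixed source $i$ the inner while loop executes $O(m)$ times. The invariant I would maintain is that after each augmentation the partial solution on sources $1,\dots,i$ is monotonic with no hole, which holds by construction of the SSP augmenting paths and the properties already established. Under this invariant, each augmentation either exhausts the remaining supply of source $i$—which can occur at most once, as it terminates the loop—or it saturates the free demand of the bottleneck sink on the chosen path. In the direct case we fill the first free sink $j$, and in the pushback case we displace flow from $j-1$ into $j$; in both situations the saturated prefix of sinks is extended, so a suitable monotone progress measure strictly increases. Since there are only $m$ sinks, this caps the inner loop at $O(m)$ iterations per source, for $O(nm)$ augmentations overall.

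For the work per augmentation, finding $o=\arg\min_j c_{ij}$ and the first free sink $j>o$ each cost $O(m)$ by a linear scan over the sinks. Computing the path cost $c_p$ and sending supply along the selected path both reduce to traversing the augmenting path in the residual graph; because this path alternates between sources and sinks and visits each node at most once, it has length $O(n+m)$, and updating the flow together with the running partial sums $S_i,D_j$ along it costs $O(n+m)$. Hence every augmentation costs $O(n+m)$, and multiplying by the $O(nm)$ augmentations yields the stated bound.

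The main obstacle is the iteration-count step: I must rule out that a pushback operation saturates a sink, frees it, and re-saturates it many times during the processing of a single source, which would destroy the $O(m)$-per-source bound. The cleanest route is to exhibit an explicit monotone progress measure—for instance the index of the rightmost saturated sink, or equivalently the free demand remaining to the left of the current frontier—and verify that every augmentation except the final supply-exhausting one strictly advances it. Here the \emph{no hole} property is what makes the argument go through: it guarantees that the displacement cascade in any single iteration is confined to adjacent sinks $j-1$ and $j$ rather than sweeping back across a block of already-filled sinks, so that each iteration genuinely advances the frontier by at least one position.
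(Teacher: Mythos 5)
Your decomposition (number of augmentations times work per augmentation) and your $O(n+m)$ bound on the cost of a single augmentation match the paper. The gap is in the counting of augmentations. Your progress measure --- ``the saturated prefix of sinks is extended'' --- does not advance on every non-terminal iteration, because you have not accounted for the third possible bottleneck of an augmenting path. When supply is pushed along the path from $j-1$, the path alternates forward edges into sinks and \emph{backward} residual edges that cancel existing flow $x_{kl}$; the amount sent is the minimum of the remaining supply of source $i$, the free demand of the terminal sink, and all the intermediate flows $x_{kl}$ being cancelled. If an intermediate $x_{kl}$ is the minimum, the augmentation ends with source $k$ completely leaving sink $l$ for its predecessor, while sink $l$ remains full (the next source on the path takes over $k$'s share) and the terminal sink still has free demand: no sink changes saturation status and your frontier does not move. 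The obstacle you flag at the end (a sink being freed and then re-saturated) is not the actual failure mode; the actual one is an augmentation that changes no sink's status at all, and the no-hole property does not rule it out. Consequently the claimed $O(m)$ bound per source is also false in general: a single source with large supply can trigger a long cascade of such interior-bottleneck augmentations. (As a side remark, the pushback path displaces flow leftward from $j-1$ toward an earlier free sink, not ``from $j-1$ into $j$''.)

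The paper closes this hole by charging each interior-bottleneck event to the (source, sink) pair it destroys: whenever the bottleneck is an intermediate edge, some source completely leaves some sink for its predecessor, and since a source only migrates leftward once placed (by monotonicity of the partial solutions), each pair $(k,l)$ can generate such an event at most once over the \emph{entire} run of the algorithm. This yields a global bound of $O(nm)$ augmentations of this type, plus $n$ supply-exhausting ones, rather than a per-source bound; with that replacement the rest of your argument goes through and gives the stated $O(nm(n+m))$ complexity.
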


\begin{proof}
Each time flow is sent, either the supply of the source is exhausted, or another source is exhausted and completely leaves one sink for its predecessor. This bounds the number of sending operations by $\mathcal{O}(nm)$ times.

The shortest path has maximum length $\mathcal{O}(m + n)$, so that sending flow takes $\mathcal{O}(m + n)$ operations.
Since the shortest path only involves adjacent sinks and the leftmost source of each sink, we need only consider a topologically sorted subgraph with $\mathcal{O}(m + n)$ edges. The shortest path can then be computed with $\mathcal{O}(m + n)$ complexity.

This yields a total complexity of $\mathcal{O}(nm(n+m))$, which is already better than the direct application of the generic algorithm.
\end{proof}

\begin{theorem}
    The partial solutions maintained by \autoref{alg:basic_transportation} are monotonic and have no hole. 
\end{theorem}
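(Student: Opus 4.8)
The plan is to argue by induction on the sequence of send operations performed by \autoref{alg:basic_transportation}, maintaining the joint invariant that the current partial solution (over the sources touched so far) is monotonic and has no hole. The base case is the empty solution, which satisfies both properties vacuously. For the inductive step I would assume the invariant holds before a send and check that each branch of the inner loop preserves it. It is convenient to first record two structural consequences of the invariant that the induction can lean on: (i) by monotonicity and the no-hole property, the sinks touched by the processed sources decompose into maximal runs in which every sink except possibly the rightmost is filled to capacity, and the set of sources feeding each sink is a contiguous, non-decreasing (in the sink index) range of indices; and (ii) since $u$ is strictly increasing, the optimal sink $o=\arg\min_j c_{ij}$ is non-decreasing in $i$, so the source currently being processed is the largest index seen so far and its entry point lies at the right end of the occupied structure.

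With this setup the case analysis splits according to which line of \autoref{alg:basic_transportation} fires. In the direct cases (``Send supply to $j$'' and, as a boundary instance, the first allocation of a source), the current source only adds flow to a single sink $j$ that is at least as far right as every sink already in use, so it cannot overtake an earlier source and monotonicity is immediate; because the loop always targets the first sink with free demand and pushes the maximum possible amount, the sinks a source fills form a contiguous block with all strictly interior sinks saturated, so no hole is introduced. The substantive case is the augmenting path (``Send supply on the path from $j-1$'', and its variant from the last sink). Here I would prove a shape lemma for the path the algorithm follows: using (i), the residual shortest path is a leftward cascade that enters the packed run at its right end (sink $j-1$), successively displaces the leftmost source of each sink one slot to the left, and terminates at the nearest sink with free demand on the left. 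Its net effect is therefore a rigid one-slot left shift of a contiguous run of sources into that free slot, with the new (rightmost) source occupying the vacated rightmost sink.

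It then remains to check that this rigid shift preserves both properties. Monotonicity follows because shifting each source of a non-decreasing index run one sink to the left keeps the per-sink source ranges non-decreasing, and the freshly inserted source, having the largest index, lands on the rightmost sink of the run; no inversion $x_{il}>0 \wedge x_{jk}>0$ with $i<j$, $k<l$ can arise. The no-hole property is preserved because the cascade fills exactly the previously free interior sink and leaves the whole run $[t,\,j-1]$ saturated, so the only sink that can remain partially filled is again a run endpoint. The main obstacle I anticipate is making the shape lemma for the augmenting path fully rigorous: one must argue from the induction hypothesis that the algorithm's restricted search (adjacent sinks and the leftmost source of each sink) really does trace this clean contiguous cascade rather than skipping sinks or reordering sources, and that the bottleneck amount pushed in a single send---being the minimum residual capacity along the path---either exhausts the source or completely empties one sink into its left neighbour, so that no send can leave a partially filled sink strictly inside an occupied run. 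Once the cascade is pinned down, the monotonicity and no-hole verifications are short exchange-style checks.
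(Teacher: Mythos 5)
Your proposal is correct and follows essentially the same route as the paper's (much terser) proof: the paper likewise argues that new supply always enters at or to the right of the last occupied sink, that the augmenting path displaces the leftmost source of each sink leftward, and that the path never passes a sink with unmet demand, which are exactly the three facts your induction and ``shape lemma'' make explicit. Your version merely spells out the inductive invariant and the cascade structure that the paper leaves implicit.
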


\begin{proof}
Supply from a new source is always sent to the last occupied sink or after, and the leftmost source allocated to a sink is always the first one leaving it, so that the solution remains monotonic.

The new source always sends supply to two adjacent sinks, and new sinks are used only when those are full.
Sending flow on the path never sends flow past a sink with unmet demand, so does not introduce new holes.
\end{proof}

\subsection{Positional encoding for a faster algorithm}

Only a small number of operations change the cost of the shortest path.
We find an implicit representation of the solution, that allows us to only perform work for these operations.

\begin{definition}
A positional encoding is a list $p_i$, $1 \leq i \leq n$ such that $0 \leq \ldots \leq p_i \leq p_{i+1} \leq \ldots \leq D_m - S_n$.
    
\end{definition}

\begin{theorem}
    Positional encodings represent the solutions that are monotonic and have no hole.
    \label{thm:positional-encoding}
\end{theorem}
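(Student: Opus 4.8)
The plan is to exhibit an explicit map $\Phi$ from positional encodings to solutions and prove that its image is exactly the set of monotonic solutions with no hole. I interpret $p_i$ as the total unmet demand lying to the left of source $i$'s allocation, once all supplies and demands are laid out on a common \emph{demand line} $[0, D_m]$. Concretely, I would have source $i$ fill the interval $I_i = [p_i + S_{i-1},\, p_i + S_i]$ of length $s_i$, and sink $j$ occupy $[D_{j-1}, D_j]$ of length $d_j$; the associated flow is $x_{ij} = |I_i \cap [D_{j-1}, D_j]|$, the length of the overlap. The monotonicity condition $p_i \le p_{i+1}$ together with $S_{i-1} \le S_i$ guarantees that the $I_i$ are disjoint and ordered left to right, while $p_1 \ge 0$ and $p_n \le D_m - S_n$ guarantee $I_1, \ldots, I_n \subseteq [0, D_m]$.

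For the forward direction I would check that $\Phi(p)$ is feasible and has the two structural properties. Feasibility is immediate from interval lengths: summing over $j$ recovers $|I_i| = s_i$ (supply equality), summing over $i$ gives the measure of $[D_{j-1}, D_j]$ covered by the disjoint $I_i$, which is at most $d_j$ (demand inequality), and all overlaps are non-negative. Monotonicity follows because $I_j$ lies entirely to the right of $I_i$ for $i < j$, so a later source cannot reach an earlier sink; the no-hole property follows because each $I_i$ is a single interval, hence the sinks it meets form a contiguous block whose interior sinks are covered completely.

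For the backward direction I would show every monotonic no-hole solution $x$ equals $\Phi(p)$ for some encoding $p$. The key step is to argue that, under both properties, the demand receiving flow from a fixed source $i$ is a single contiguous interval $[a_i, b_i]$ of length $s_i$: the interior sinks are full by the no-hole property, and within the two boundary sinks the contributing sources stack in index order by monotonicity, so source $i$ occupies the part of its first sink adjacent to the next sink and the part of its last sink adjacent to the previous one. I then set $p_i := a_i - S_{i-1} = b_i - S_i$. Monotonicity gives $a_{i+1} \ge b_i$, hence $p_{i+1} - p_i = a_{i+1} - b_i \ge 0$; the endpoints satisfy $a_1 \ge 0$ and $b_n \le D_m$, giving the bounds $0 \le p_i \le D_m - S_n$. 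Since $I_i = [a_i, b_i]$ by construction, $\Phi(p) = x$.

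The main obstacle I anticipate is exactly this contiguity claim, i.e. that a monotonic no-hole solution really assigns each source one interval of demand rather than scattered pieces. It rests on the interplay of the two definitions: the no-hole property controls the interior sinks, while monotonicity forces the correct stacking inside the shared boundary sinks (in particular, $x_{i,k} > 0$ for some $k$ beyond source $i$'s first sink forbids source $i+1$ from touching that first sink), and this is where the argument must be made carefully. I would also note explicitly that $\Phi$ need not be injective: when a source's supply fits strictly inside one sink, $p_i$ may vary over a range without changing $x$, so the correspondence is a surjection onto the monotonic no-hole solutions rather than a bijection.
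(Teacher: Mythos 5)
Your proof is correct and is essentially the paper's argument made explicit: your interval-overlap map $\Phi$ is exactly the solution of the balanced problem the paper builds by inserting dummy supplies $p_{i+1}-p_i$ between the original sources, and your choice $p_i = a_i - S_{i-1}$ coincides with the paper's formula $p_i = D_{j-1} - \sum_{l < j}\sum_{k} x_{kl}$. You spell out the contiguity/stacking argument for the backward direction in more detail than the paper does, but the route is the same.
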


\begin{proof}
Given an encoding $p$, we create a balanced 1D OT with $n + 1$ additional sources inserted between the original sources, and supplies $p_{i+1} - p_i$.
This yields a problem with $2 n + 1$ sources and $m$ sinks, with supplies $p_0$, ..., $s_i$, $p_{i+1} - p_i$, $s_{i+1}$, ..., $D_m - S_n - p_n$ and demands $d_j$.
The solution to this balanced problem can be found in linear time~\cite{sejourne_faster_2022}, and satifies the monotonicity and no-hole properties.
These properties remain true of the solution restricted to the original sources.

  \begin{figure}[H]
	\centering
	
\begin{subfigure}{.4\textwidth}
	\centering
	\includegraphics[width=0.95\textwidth]{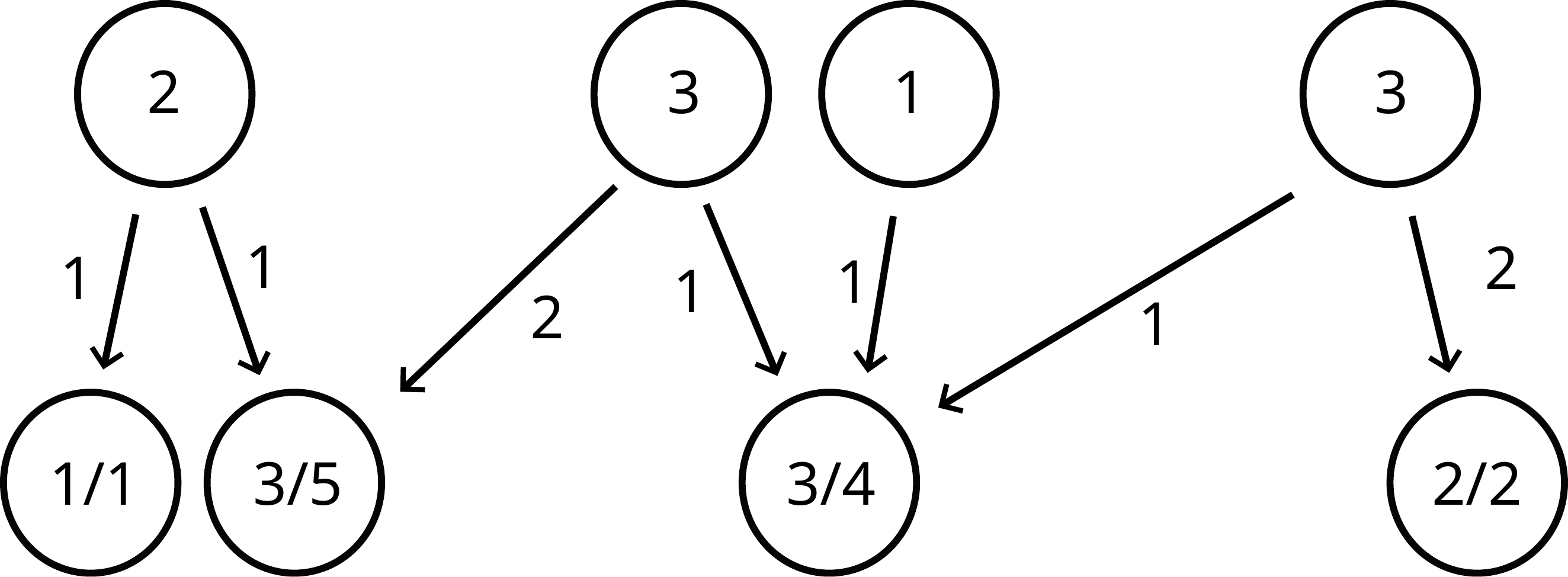}
 \caption{A monotonic solution without holes. A valid positional encoding is $0, 2, 2, 3$}
 \end{subfigure}%
\hfil
\begin{subfigure}{.4\textwidth}
	\centering
	\includegraphics[width=0.95\textwidth]{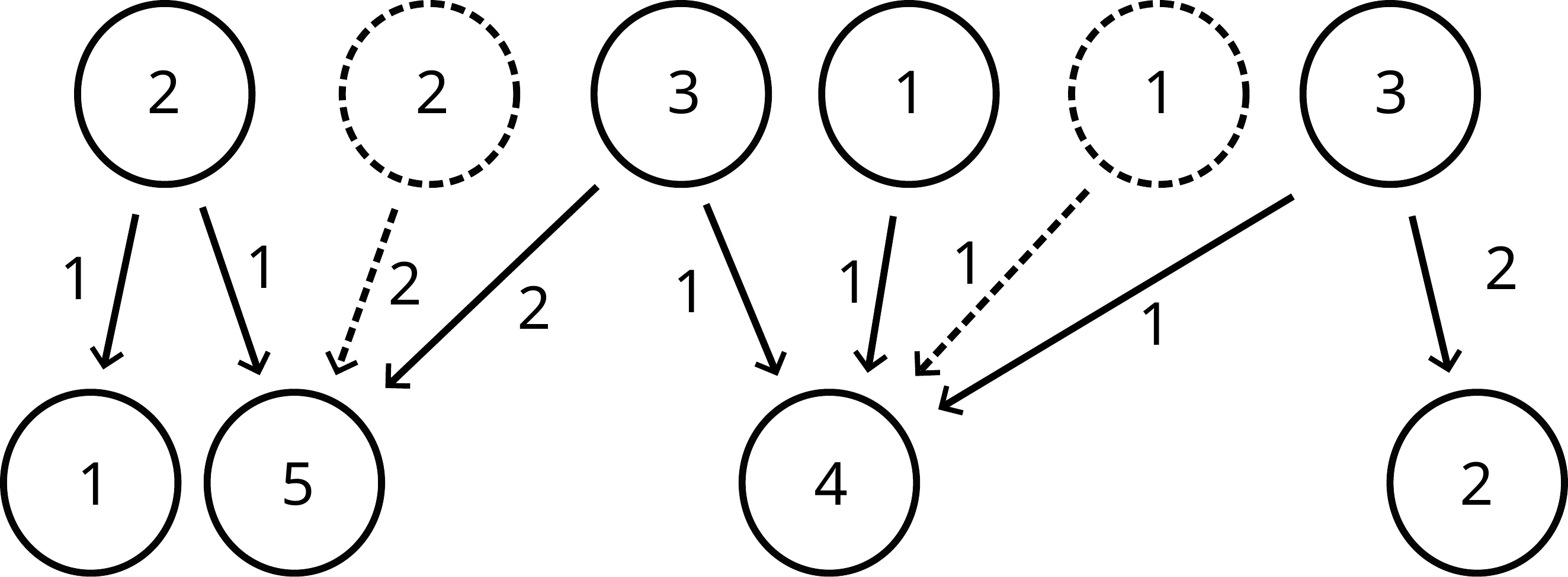}
 \caption{Differences in position correspond to places where more supply could be allocated.}
 \end{subfigure}
 \caption{Visualizing the positional encoding associated with a solution.}
  \end{figure}

Reciprocally, if a solution of the original problem satisfies the monotonicity condition and has no hole, we take $p_i$ such that $p_{i+1} - p_{i}$ represents the unmet demand between two sources.
Note that this representation is not unique, as there are multiple ways to insert the additional supply.
For example, we may use $p_i = D_{j-1} - \sum_{l < j}\sum_{k} x_{kl}$ with
$j = \min_{l, x_{il} \neq 0} l$.
\end{proof}

\begin{theorem}
    When sending flow, all sources on the path share a common position $p^*$.
    A valid encoding after the flow is sent is to decrease $p^*$.
\end{theorem}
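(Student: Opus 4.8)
The plan is to translate the residual-augmenting step of \autoref{alg:basic_transportation} into the language of positions, using the explicit formula $p_i = D_{j-1} - \sum_{l<j}\sum_k x_{kl}$ with $j = \min_{l,\,x_{il}\neq 0} l$ from the proof of \autoref{thm:positional-encoding}; under that formula $p_i$ is just the total unmet demand strictly to the left of the first sink that source $i$ feeds. The one elementary fact I will lean on, already contained in that proof, is a \emph{packing identity}: if every sink strictly between the first sinks of two consecutive sources $a$ and $a+1$ is completely full, then $p_a = p_{a+1}$, since in the formula the extra demand of those saturated sinks is exactly cancelled by the equal flow they carry.

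First I would fix the combinatorial shape of the path. In a monotonic, hole-free solution with a Monge cost, the shortest augmenting path enters the new source, moves only between adjacent sinks, always leaves a sink through its leftmost allocated source, and terminates at the nearest sink carrying free demand. Crucially, every sink it traverses before that endpoint is \emph{saturated}: if an interior sink had free demand the path would have stopped there, contradicting minimality. Consequently all the already-placed sources it touches lie inside a single contiguous run $k,\dots,i-1$ delimited by consecutive regions of free demand, with every sink interior to the run completely full.

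\textbf{Part 1} is then immediate: applying the packing identity to each adjacent pair of the run gives $p_k = p_{k+1} = \cdots = p_{i-1} =: p^*$, so in particular every source the path touches carries this one position. For \textbf{Part 2} I would push the bottleneck flow $f$ along the path and track its net effect: the load of every interior sink is unchanged (one source's contribution is swapped for another's) while the free endpoint gains $f$, so the total unmet demand lying to the left of the run drops by exactly $f$ and every other gap is preserved. Re-evaluating the formula, the whole run together with the inserted source $i$ can be assigned the single value $p^*-f$; one checks $0 \le p_{k-1}\le p^*-f \le D_m - S_n$ using that $f$ cannot exceed the unmet demand separating source $k-1$ from the run, so the decreased value is admissible and, by the reciprocal construction in \autoref{thm:positional-encoding}, encodes the updated solution.

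The main obstacle is the structural first step: proving rigorously that the residual shortest path crosses only saturated sinks and exits each through its leftmost source, which is what confines the altered sources to one packed run and thereby licenses the packing identity across the entire path. A secondary subtlety appears in Part 2 when the endpoint hole is only partially filled because $i$ exhausts first: source $k$'s first sink then moves onto the hole while free demand remains there, so keeping the whole run at the uniform value $p^*-f$ relies on the non-uniqueness of the encoding noted in \autoref{thm:positional-encoding}, attributing the residual gap to the left of the run rather than at its head.
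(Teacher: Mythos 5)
Your proposal is correct and follows essentially the same route as the paper's (much terser) proof: Part 1 is the observation that all interior sinks on the path are saturated, which forces a common position on the sources feeding them, and Part 2 handles the endpoint by invoking the non-uniqueness of the encoding to attribute any residual unmet demand to the left of the run. You have merely made explicit, via the formula $p_i = D_{j-1} - \sum_{l<j}\sum_{k} x_{kl}$, the steps the paper leaves implicit, and the "structural obstacle" you flag is already settled by the construction of \autoref{alg:basic_transportation} (the path terminates at the first sink with free demand) together with the preceding monotonicity/no-hole theorem.
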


\begin{proof}
    Before flow is sent, all sinks on the path except the last one are full: all sources with supply allocated to these sinks must share the same positional encoding.

    After sending flow, the last source on the path may have completely left the sink, so that it may admit other positional encodings.
    If we allocate unmet demand before it in the corresponding 1D OT, there is no gap with the next source and the new value of $p^*$ is a valid encoding.
\end{proof}

\begin{theorem}
    The changes in cost when sending flow happen only for values of $p^*$ of the form $D_{j} - S_{i}$.
\end{theorem}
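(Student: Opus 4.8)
The plan is to use the balanced embedding constructed in \autoref{thm:positional-encoding} to turn the positional parameter $p^*$ into a geometric quantity, and then to argue that the cost of the shortest path is a piecewise-constant function of $p^*$ whose only breakpoints are the values $D_j - S_i$.

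First I would fix coordinates. By the construction in \autoref{thm:positional-encoding}, the mass of source $i$ occupies the cumulative interval $[\,S_{i-1} + p_i,\ S_i + p_i\,]$ along the demand axis (the total supply preceding source $i$ in the embedded balanced instance is $S_{i-1} + p_i$), and in the optimal balanced solution a unit of mass at cumulative coordinate $t$ is served by the unique sink $j$ with $D_{j-1} < t \le D_j$. Hence the set of sinks to which source $i$ sends flow, together with the split points between consecutive sinks, is determined entirely by where the endpoints $S_{i-1}+p_i$ and $S_i+p_i$ fall relative to the thresholds $D_0, D_1, \ldots, D_m$. By the previous theorem the sources on the current path all share the common value $p_i = p^*$, so their interval endpoints are exactly the numbers $S_{k-1}+p^*$ and $S_k+p^*$ for the indices $k$ on the path.

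Next I would express the path cost combinatorially. The shortest path is a fixed alternating sequence of forward and backward edges determined by the leftmost source allocated to each full sink it traverses, so its cost is a sum of fixed edge costs $c_{ij}$ depending only on this incidence data: which sinks are full and on the path, and which source is leftmost at each. None of these costs depends on $p^*$ directly. Consequently, as $p^*$ decreases while no endpoint $S_{k-1}+p^*$ or $S_k+p^*$ crosses any threshold $D_j$, the incidence data is unchanged, the path is the same sequence of edges, and its cost is constant; this is precisely the bundling that lets the algorithm send a whole block of flow at one cost. It then remains to identify the breakpoints: the incidence data can change only when an endpoint meets a threshold, i.e. when $S_k + p^* = D_j$ or $S_{k-1}+p^* = D_j$, both of the form $p^* = D_j - S_i$. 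At such a value exactly one structural event occurs: a source's flow to a sink drops to zero and its successor takes over (changing the shortest-path cost by $\delta_{ij}$, as recorded in the definition of $\delta$), a sink becomes full so the path extends, or a sink empties so the path shortens. Each changes the incidence data, and each happens at a value $p^* = D_j - S_i$; between consecutive such values the cost is constant by the argument above, which establishes the claim.

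The main obstacle will be the bookkeeping in the last step: one must verify that \emph{every} way the incidence structure can change reduces to an endpoint-meets-threshold coincidence of the stated form, and in particular that the events lengthening or shortening the path (a sink becoming full or empty, driven by the new source's mass) are of the same form $D_j - S_i$ as the events where a source leaves a sink and its successor takes over. Once all structural changes are seen to be endpoint crossings, the piecewise-constant behaviour of the cost and the characterization of its breakpoints follow immediately.
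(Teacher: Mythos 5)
Your proposal is correct and follows essentially the same route as the paper: both identify the cost changes with the structural events of a source entering or leaving a sink, and both use monotonicity (via the cumulative sums) to place these events exactly at positions of the form $D_j - S_i$ or $D_j - S_{i-1}$. Your version merely spells out in more detail the piecewise-constant behaviour of the path cost between breakpoints, which the paper leaves implicit.
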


\begin{proof}
    Changes in costs happen when a edge on the path drops to zero flow ($x_{ij}$ becomes zero i.e. a source leaves a sink) or a new edge is available for use ($x_{ij}$ becomes non-zero i.e. a source enters a sink).
    Since the solution is monotonic, a source $i$ enters a sink $j$ when its position is $D_{j} - S_{i-1}$, and leaves it when its position is $D_{j - 1} - S_{i}$.
\end{proof}

\begin{algorithm}
\caption{Fast 1D successive shortest path}
\label{alg:fast_transportation}
\begin{algorithmic}
\For{each source $i$}
\State $o \gets \arg\,\min_k c_{ik}$ \text{ the best sink for this source}
\State Store changes in costs for source $i$ entering sinks $j$ to $o-1$
\State Store changes in costs between sources $i-1$ and $i$
\State $j \gets \max(j, o)$
\State $p_i \gets \max(D_{j-1} - S_{i-1}, p_{i-1})$ the initial position
\While{$p_i > D_j - S_i$ ($i$ has unallocated supply)}
    \State $\gamma \gets$ cost change at position $p_i$
    \If{$\gamma + c_{ij} < c_{ij+1}$}
        \State Delete the change at $p_i$
        \State Decrease $p_i$ to the first position with a change, or $D_j - S_i$
        \State Increase the cost change at $p_i$ by $\gamma$
    \Else
        \State Increase $j$
        \State Store the change in cost for source $i$ leaving sink $j$
    \EndIf
\EndWhile
\EndFor
\For{each source $i$ from $n-1$ to $1$}
\State $p_{i} \gets \min(p_i, p_{i+1})$
\EndFor
\end{algorithmic}
\end{algorithm}

The fast version of the SSP is shown in \autoref{alg:fast_transportation}.
It keeps the current positional encoding of the solution, on which multiple SSP steps can be performed by simply decreasing the positions.
The positions always decrease and only the last one is used in subsequent iterations, so that it is sufficient to update the last position. The previous elements are updated in linear time once the algorithm has completed.
Our algorithm maintains the cost of the shortest path, and only performs work for positions that correspond to changes in shortest path cost. 

  \begin{figure}[ht]
	\centering
	
\begin{subfigure}{.35\textwidth}
	\centering
	\includegraphics[width=0.95\textwidth]{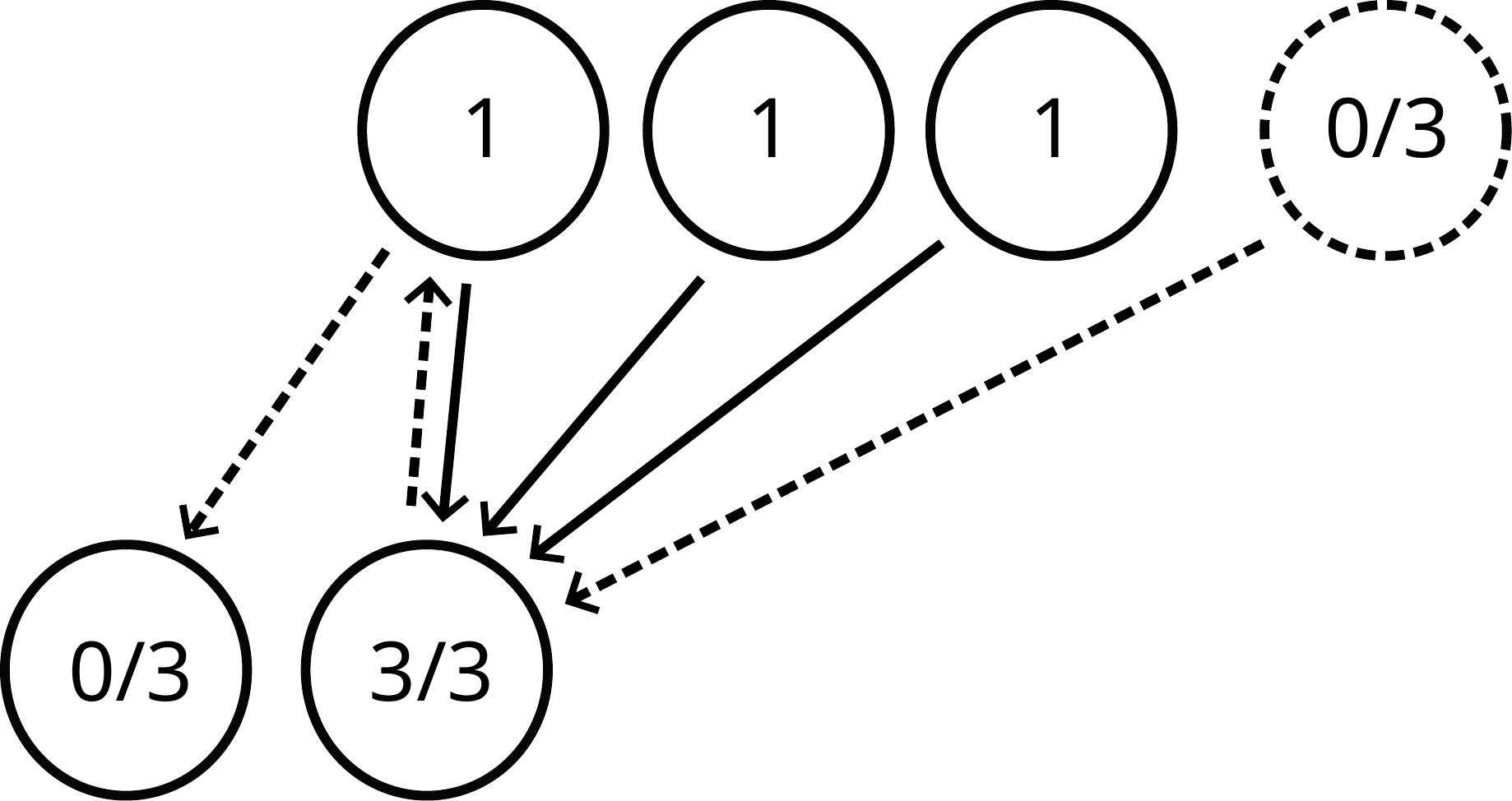}
 \end{subfigure}%
\hfil
\begin{subfigure}{.35\textwidth}
	\centering
	\includegraphics[width=0.95\textwidth]{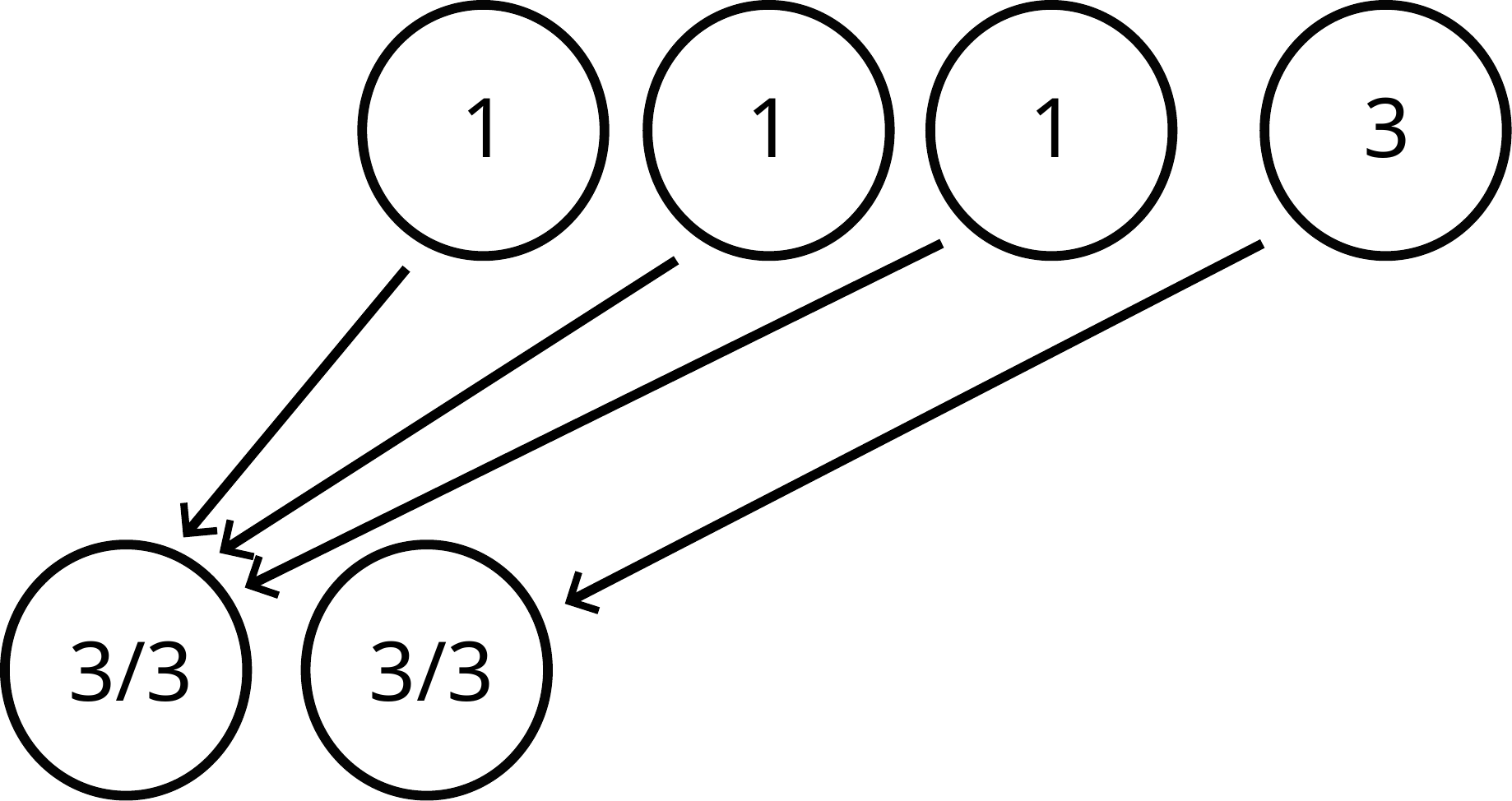}
 \end{subfigure}
 \captionsetup{width=.8\linewidth}
 \caption{On this example, SSP needs to send flow three times (first path shown on the left), but the cost of the path does not change. Our algorithm updates the positional encoding just once, from $3, 3, 3, 3$ to $0, 0, 0, 0$.}
  \end{figure}
  
\subsection{Complexity analysis}
\begin{theorem}
The algorithm has time complexity $\mathcal{O}((n+m) \log(n+m))$.
\end{theorem}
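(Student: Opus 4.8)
The plan is to reduce the running time to a count of abstract data-structure operations, and then bound that count. First I would fix the representation I am analyzing: the ``changes in cost'' are stored in a balanced binary search tree keyed by position (the values $D_j - S_i$), supporting insertion, point update, deletion, point lookup, and predecessor queries each in $\mathcal{O}(\log(n+m))$ time. With this choice, every line of \autoref{alg:fast_transportation} that stores, reads, deletes, or locates a change costs $\mathcal{O}(\log(n+m))$, while the initialization of each $p_i$ and the final reverse pass that sets $p_i \gets \min(p_i,p_{i+1})$ together cost only $\mathcal{O}(n)$. Thus the theorem reduces to proving that the total number of structure operations over the whole run is $\mathcal{O}(n+m)$.

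Next I would bound the number of changes ever inserted. The key observation is that the sink frontier $j$ is a global variable, is non-decreasing, and is capped at $m$, so its total advance over the entire run is at most $m$. The frontier advances in exactly two places: the line storing entering changes for sinks $j$ to $o-1$ inserts one change per unit of advance, and the else branch of the while loop inserts one leaving change per unit of advance. Hence the entering and leaving changes together number $\mathcal{O}(m)$. The line inserting the change between consecutive sources $i-1$ and $i$ (the $\delta$ terms) runs once per source, contributing $\mathcal{O}(n)$, and the only other entries created are at most one ``bottom'' position $D_j - S_i$ per source when its loop terminates. Altogether the total number of changes ever inserted is $\mathcal{O}(n+m)$.

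The main obstacle, and the heart of the proof, is bounding the total number of inner-while-loop iterations summed over all sources, since the naive product bound would only give $\mathcal{O}(nm)$. I would split iterations by branch. The else branch strictly advances $j$, so it runs at most $m$ times across the whole algorithm. The if branch deletes exactly one stored change per iteration and never creates a net new entry, since it either merges $\gamma$ into the existing predecessor change or into the single bottom position; therefore the number of if-branch iterations is at most the number of changes ever created, which the previous paragraph bounds by $\mathcal{O}(n+m)$. Consequently the total number of while-loop iterations is $\mathcal{O}(n+m)$, and each performs only $\mathcal{O}(1)$ structure operations (a point read, a deletion, a predecessor query, and a point update).

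Combining these bounds gives $\mathcal{O}(n+m)$ structure operations, each $\mathcal{O}(\log(n+m))$, plus $\mathcal{O}(n)$ for the linear passes, which is the claimed $\mathcal{O}((n+m)\log(n+m))$. The only loose end is the per-source computation of the best sink $o = \arg\min_k c_{ik}$; here I would note that, because the sources and sinks are sorted and the costs are $\lVert u_i - v_j \rVert$, the optimal sink $o_i$ is non-decreasing in $i$, so all the $o_i$ can be obtained by a single monotone scan in $\mathcal{O}(n+m)$ (or by binary search in $\mathcal{O}(n\log m)$), neither of which affects the bound.
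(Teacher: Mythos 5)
There is a genuine gap in your accounting of the $\delta$ insertions, and it is exactly where the paper has to do real work. You assert that the line ``Store changes in costs between sources $i-1$ and $i$'' contributes $\mathcal{O}(n)$ in total because it \emph{runs} once per source, but a single execution of that line is not a single insertion: it inserts one event for every $j$ with $\delta_{i-1,j} \neq 0$, and for a fixed pair of consecutive sources there can be up to $m-1$ such $j$ (take all sinks lying strictly between $u_{i-1}$ and $u_i$; then $u_{i} > v_j$ and $u_{i-1} < v_{j+1}$ for every $j$). So the naive count of these insertions is $\mathcal{O}(nm)$, not $\mathcal{O}(n)$. This hole propagates: your bound on the if-branch iterations of the while loop is ``at most the number of changes ever created,'' so that too collapses to $\mathcal{O}(nm)$ and the theorem no longer follows.

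The missing ingredient is the paper's \autoref{thm:complexity}: for the absolute-value cost, $\delta_{ij} \neq 0$ forces $u_{i+1} > v_j$ and $u_i < v_{j+1}$, which implies that the non-zero entries of $\delta$ form a monotone staircase (no two non-zero entries share the same value of $i+j$), hence there are at most $n+m-3$ of them \emph{in total across all sources}. That is the correct bound on the $\delta$ insertions — $\mathcal{O}(n+m)$, not $\mathcal{O}(n)$ — and it is the structural fact that makes the whole charging argument go through. (This is also why the paper's generalization to other convex costs degrades to $\mathcal{O}(nm\log(n+m))$: the staircase property is specific to the piecewise-linear cost.) The rest of your proposal — the balanced BST with logarithmic operations, charging else-branch iterations to advances of the global frontier $j$, charging if-branch iterations to deletions, the monotone scan or binary search for the optimal sinks $o_i$, and the final linear passes — matches the paper's argument and is fine as written.
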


\begin{proof}
The number of iterations and the number of events are $\mathcal{O}(n + m)$:
\begin{itemize}
    \item the changes in cost for source $i$ entering sink $j$ alone  ($c_{ij+1} - c_{ij}$) are added to the list at most once per sink, for a maximum of $m - 1$ events;
    \item the changes in cost between $i-1$ and $i$ ($\delta_{ij}$) are each added to the list at most once, for a maximum of $m + n - 3$ events as shown by \autoref{thm:complexity};
    \item at each iteration of the inner loop, either $j$ increases or an event is removed, for a maximum of $3m + n - 4$ iterations.
\end{itemize}

The operations at each iteration can be performed in $\mathcal{O}(\log (n + m))$:
\begin{itemize}
    \item The events can be stored in a binary tree, with insertion, retrieval in logarithmic time;
    \item Finding the range of non-zero $\delta_{ij}$ and finding the optimal sink for a given source is accomplished with binary search in a sorted list.
\end{itemize}

Finally, the conversion from a positional encoding to a solution can be found in linear time, as described in the proof of \autoref{thm:positional-encoding}.
\end{proof}

\begin{lemma}
The numbers of changes in cost $\delta_{ij}$ that are non-zero is at most $n + m - 3$.
\label{thm:complexity}
\end{lemma}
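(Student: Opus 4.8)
The plan is to reduce the lemma to a purely combinatorial counting problem by first pinning down exactly when $\delta_{ij}$ is nonzero. Writing $c_{ij}=|u_i-v_j|$ and using the orderings $u_i<u_{i+1}$, $v_j<v_{j+1}$, I would compute the mixed second difference and show
\[
\delta_{ij} \;=\; 2\,\max\!\bigl(0,\ \min(u_{i+1},v_{j+1})-\max(u_i,v_j)\bigr),
\]
i.e.\ $\delta_{ij}$ is twice the length of the intersection of the consecutive-source interval $[u_i,u_{i+1}]$ with the consecutive-sink interval $[v_j,v_{j+1}]$. This is a finite identity that can be verified either by a short case analysis over the six possible interleavings of $u_i,u_{i+1},v_j,v_{j+1}$, or by recalling that the mixed second difference of $|x-y|$ over a rectangle equals twice the length of the diagonal $\{x=y\}$ lying inside it. The only consequence I actually need is the equivalence: $\delta_{ij}\neq 0$ if and only if $[u_i,u_{i+1}]$ and $[v_j,v_{j+1}]$ overlap in a set of positive length.

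It then remains to bound the number of positively-overlapping pairs $(i,j)$. Here the key observation is that the $n-1$ intervals $[u_i,u_{i+1}]$ tile $[u_1,u_n]$ and the $m-1$ intervals $[v_j,v_{j+1}]$ tile $[v_1,v_m]$, so every overlap occurs inside the common range $I=[\max(u_1,v_1),\min(u_n,v_m)]$. I would set up a bijection between the nonzero pairs and the cells of the common refinement of the two tilings restricted to $I$, where a \emph{cell} is a maximal subinterval of $I$ crossing no breakpoint. In one direction, each cell has positive length and lies inside a unique source interval and a unique sink interval, so the corresponding pair overlaps and has $\delta_{ij}\neq0$. In the other direction, a nonzero pair has overlap exactly $[\max(u_i,v_j),\min(u_{i+1},v_{j+1})]$, which contains no breakpoint in its interior (the interiors of $[u_i,u_{i+1}]$ and $[v_j,v_{j+1}]$ contain none by construction) and is therefore a single cell. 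This makes the correspondence a genuine bijection, so the number of nonzero $\delta_{ij}$ equals the number of cells.

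Counting cells then finishes the proof: the cells are delimited by the distinct breakpoint positions lying strictly inside $I$, and the only candidates are $u_2,\dots,u_{n-1}$ and $v_2,\dots,v_{m-1}$, since the extreme breakpoints $u_1,u_n,v_1,v_m$ sit on or outside the boundary of $I$. Hence there are at most $(n-2)+(m-2)$ interior breakpoint positions, producing at most $1+(n-2)+(m-2)=n+m-3$ cells and therefore at most $n+m-3$ nonzero $\delta_{ij}$. The steps I expect to require the most care are, first, making the $\delta_{ij}$ identity airtight across all interleavings, and second, the tie handling in the cell count: when $u_i=v_j$ at an interior point, two breakpoints coincide at one position and the adjacent skipped pairs have zero overlap, so a tie can only decrease the number of distinct interior positions and hence only strengthen the bound. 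The degenerate cases where $I$ is empty or a single point give no nonzero $\delta_{ij}$, consistent with $n+m-3\ge 0$ whenever $\delta$ is defined (that is, $n,m\ge 2$).
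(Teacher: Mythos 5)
Your proposal is correct, and it shares its first step with the paper --- establishing that $\delta_{ij}\neq 0$ exactly when the consecutive-source interval $[u_i,u_{i+1}]$ and the consecutive-sink interval $[v_j,v_{j+1}]$ overlap --- but the counting argument that follows is genuinely different. The paper stays in the discrete index grid: from the overlap characterization it deduces that two nonzero pairs can never be ``anti-ordered'' ($k>i$, $l<j$ or $k<i$, $l>j$), so each anti-diagonal $i+j=\mathrm{const}$ contains at most one nonzero entry, and there are only $n+m-3$ anti-diagonals since $i+j$ ranges over $2,\dots,n+m-2$. You instead work geometrically on the real line: you prove the sharper identity $\delta_{ij}=2\max\bigl(0,\min(u_{i+1},v_{j+1})-\max(u_i,v_j)\bigr)$ and exhibit a bijection between nonzero pairs and the positive-length cells of the common refinement of the two interval tilings, then count cells via the at most $(n-2)+(m-2)$ interior breakpoints. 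Your route costs a little more bookkeeping (the six-case identity, the tie handling at coinciding breakpoints) but buys more: an exact value for each $\delta_{ij}$ and an exact count of the nonzero entries as the number of refinement cells, rather than just an upper bound; the paper's route is shorter and needs only the vanishing conditions $u_{i+1}\leq v_j$ or $u_i\geq v_{j+1}$, not the full formula. Both correctly reach the bound $n+m-3$.
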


\begin{proof}
First, note that $\lvert u_i - v_j \rvert - \lvert u_i - v_{j+1} \rvert - \lvert u_{i+1} - v_j \rvert + \lvert u_{i+1} - v_{j+1} \rvert$ is $0$ if $u_{i+1} \leq v_j$ or $u_{i} \geq v_{j+1}$.
It follows that, if $\delta_{ij}$ is nonzero (that is $u_{i+1} > v_{j}$ and $u_{i} < v_{j+1}$), then all $\delta_{kl}$ with either $k > i, l < j$ or $k < i, l > j$ are zero.
In particular, there is no $k, l \neq i, j$ such that $k + l = i + j$ and $\delta_{kl} \neq 0$.

There are $n + m - 3$ possible values for $i + j$, giving us a bound for the number of non-zero elements in $\delta$.
\end{proof}

\begin{theorem}
    The algorithm remains valid when the distance is a convex function other than the absolute value, with complexity $\mathcal{O}(nm \log(n+m))$.
\end{theorem}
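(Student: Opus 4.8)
The plan is to split the claim into a correctness part and a complexity part, and to isolate the single place where the absolute value was actually used. The key observation is that every structural fact underlying the algorithm follows from the \emph{Monge} (supermodularity) property $\delta_{ij} \geq 0$, and that this property is precisely what convexity of the distance guarantees. Writing the cost as $c_{ij} = f(u_i - v_j)$ for a convex $f$, I would first verify $\delta_{ij} = f(u_i - v_{j+1}) + f(u_{i+1} - v_j) - f(u_i - v_j) - f(u_{i+1} - v_{j+1}) \geq 0$. Setting $a = u_i - v_{j+1}$, $b = u_{i+1} - v_j$, $c = u_i - v_j$, $d = u_{i+1} - v_{j+1}$, the orderings $u_i < u_{i+1}$ and $v_j < v_{j+1}$ give $a \leq c, d \leq b$ together with $a + b = c + d$; hence $(a,b)$ majorizes $(c,d)$ and convexity yields $f(a) + f(b) \geq f(c) + f(d)$, i.e. $\delta_{ij} \geq 0$. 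This recovers the existence of a monotonic optimal solution with no hole, so the positional encoding of \autoref{thm:positional-encoding} and the two candidate paths considered by each SSP step remain valid.

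Next I would argue that nothing else in the correctness analysis is specific to the absolute value. The characterization that cost changes occur only at positions of the form $D_j - S_i$ is purely combinatorial---it records when a source enters or leaves a sink under monotonicity---and is therefore unchanged. The only remaining cost-dependent primitive is the computation of $o = \arg\min_k c_{ik}$ by binary search; since $v_k$ is increasing and $f$ is convex, the map $k \mapsto f(u_i - v_k)$ is unimodal, so its minimizer is still located in $\mathcal{O}(\log(n+m))$ time. Thus \autoref{alg:fast_transportation} computes the correct solution verbatim.

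For the complexity, I would point out that \autoref{thm:complexity} is the \emph{unique} step that exploited the piecewise-linear structure of $|\cdot|$: its proof used that $\delta_{ij}$ vanishes whenever $u_{i+1} \leq v_j$ or $u_i \geq v_{j+1}$, which fails for a strictly convex $f$. Accordingly the bound of $n + m - 3$ nonzero $\delta_{ij}$ degrades, and for a general convex cost all $nm$ of them may be nonzero. Re-running the event count then gives: the ``entering'' changes $c_{ij+1} - c_{ij}$ still total $\mathcal{O}(m)$ because the sink pointer $j$ only advances, but the $\delta_{ij}$ changes now number $\mathcal{O}(nm)$; correspondingly each inner-loop iteration either advances $j$ or deletes one stored event, for $\mathcal{O}(nm)$ iterations overall. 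With each operation still costing $\mathcal{O}(\log(n+m))$ (binary-tree updates and binary search), the total is $\mathcal{O}(nm \log(n+m))$.

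The main obstacle is the first step: cleanly establishing that convexity forces $\delta_{ij} \geq 0$ through the majorization inequality, and---more importantly---verifying that this single inequality is all the earlier proofs ever needed about the cost, so that correctness transfers without modification. Confirming that \autoref{thm:complexity} is genuinely the only place where the specific geometry of the absolute value entered, and hence the only ingredient that changes, is the part that will require the most care.
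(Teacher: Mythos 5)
Your proposal is correct, and its core coincides with the paper's: the only quantitative ingredient that changes is \autoref{thm:complexity}, whose bound of $n+m-3$ nonzero $\delta_{ij}$ degrades to $nm$, and re-running the event count then yields $\mathcal{O}(nm\log(n+m))$. The paper's entire proof is that one sentence --- it asserts the $nm$ bound on the nonzeros of $\delta$ and leaves everything else implicit. What you add, and what the paper omits, is the verification that correctness transfers: the majorization argument showing $\delta_{ij}\geq 0$ for convex $f$ (i.e.\ the cost matrix remains Monge, so monotonic no-hole optimal solutions and the positional encoding still exist), and the observation that $k\mapsto f(u_i-v_k)$ is unimodal so the $\arg\min$ is still found by binary search. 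These checks are genuinely needed for the theorem's claim that ``the algorithm remains valid,'' not just for the complexity bound, so your version is the more complete argument; the paper's one-liner buys brevity at the cost of leaving the reader to confirm that convexity is exactly the hypothesis under which all the earlier structural lemmas go through.
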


\begin{proof}
    In this case, the number of non-zeros in $\delta$ can only be bounded by $nm$.
\end{proof}

\section{Implementation}

We implemented our algorithm in C++. We validate on 10000 randomly generated instances of varying size, using the solutions provided by the Networkx min-cost-flow solver as a baseline~\cite{hagberg_exploring_2008}. The code is available online under an open source license\footnote{\href{https://github.com/Coloquinte/1DTransport}{https://github.com/Coloquinte/1DTransport}}, and used in the electronic placement tool Coloquinte\footnote{\href{https://github.com/Coloquinte/PlaceRoute}{https://github.com/Coloquinte/PlaceRoute}}.

\section{Conclusion}

In this paper, we presented an algorithm to solve the 1D unbalanced optimal transport problem in $\mathcal{O}(n\log n)$ time, closing the gap with the balanced case and improving the best known algorithm by a linear factor.

Our method takes advantage of the successive shortest path algorithm on the corresponding network flow problem.
We gave an equivalent representation of the solutions generated by the algorithm, and show that the steps taken by the algorithm are simple updates on this representation.
This led to an algorithm where most of the updates can be applied as a single step.
We then proved its complexity.

Our algorithm exploits the 1D structure of the problem, and does not generalize to higher dimensions. Future work could focus on improving the current algorithms for two dimensions and more.

\clearpage
\printbibliography

\end{document}